\documentclass[runningheads,envcountsame]{llncs}
\usepackage[utf8]{inputenc}
\usepackage{xspace}
\usepackage{url}
\usepackage{listings}
\usepackage{shuffle}
\usepackage{amsmath}
\usepackage{stackrel}
\usepackage{amssymb,amsfonts}
\usepackage{enumerate}
\usepackage{tikz}
\usepackage[english]{babel}

\usetikzlibrary{shapes,shapes.geometric,arrows,fit,calc,positioning,automata,decorations.pathmorphing}

\def\RR{\mathbb{R}}

%\newtheorem{innercustomthm}{Proposition}
%\newenvironment{customthm}[1]
%  {\renewcommand\theinnercustomthm{#1}\innercustomthm}
%  {\endinnercustomthm}

% algorithms default properties
%\renewcommand{\lstlistingname}{Algorithm}
%\lstset{language=Python, mathescape=true, basicstyle=\footnotesize,
%  frame=lines, numbers=none, aboveskip=20pt}

\newcommand{\lang}{\mathcal{L}}

\newcommand{\DFAs}{\textsf{DFAs}\xspace}
\newcommand{\NFA}{\textsf{NFA}\xspace}
\newcommand{\NFAs}{\textsf{NFAs}\xspace}
\newcommand{\nfa}{\mathcal{A}}

\newcommand{\nodot}{ }
\newcommand{\re}{\textsf{RE}\xspace}
\newcommand{\res}{\textsf{REs}\xspace}

\newcommand{\Ash}{\Sigma}
\newcommand{\Amat}{\mathsf{A}}
\newcommand{\Mmat}{\mathsf{M}}
\newcommand{\Emat}{\mathsf{E}}
\newcommand{\Cmat}{\mathsf{C}}

\newcommand{\tsh}{\mathsf{T}_\shuffle}

\newcommand{\apd}{\mathcal{A}_{pd}}
\newcommand{\apos}{\mathcal{A}_{pos}}

\newcommand{\eqsynt}{=}
\newcommand{\eqsem}{=}

\newcommand{\fado}{\textsf{FAdo}\xspace}

\title{Partial Derivative Automaton \\ for Regular Expressions with Shuffle}

\author{Sabine Broda \and 
António Machiavelo \and Nelma Moreira \and Rogério Reis
%\thanks{%
%  Authors partially funded by the
%  European Regional Development Fund through the programme COMPETE
%  and by the Portuguese Government through the FCT under projects
%  PEst-C/MAT/UI0144/2013 and FCOMP-01-0124-FEDER-020486.} 
}

\institute{CMUP \& DCC, Faculdade de Ciências da Universidade do Porto,
Portugal\\
Rua do Campo Alegre, 4169-007 Porto, Portugal\\
\email{sbb@dcc.fc.up.pt,ajmachia@fc.up.pt,\{nam,rvr\}@dcc.fc.up.pt}}

\begin{document}

\maketitle

\begin{abstract}
We generalize the partial derivative automaton to regular expressions with shuffle and study its size in the worst and in the average case. The number of states of the partial derivative automata is in the worst case at most $2^m$, where $m$ is the number of letters in the expression, while asymptotically and on average it is no more than~$(\frac43)^m$.
\end{abstract}

\section{Introduction}
\label{sec:intro}
The shuffle (or interleaving) operation is closed for regular
languages, and extended regular expressions with shuffle can be much
more succinct than the equivalent ones with disjunction,
concatenation, and star operators. For the shuffle operation, Mayer
and Stockmeyer~\cite{mayer94:_word_probl_this_time_with_inter} studied
the computational complexity of membership and inequivalence
problems. Inequivalence is exponential time complete, and membership
is NP-complete for some classes of regular languages. In particular,
they showed that for regular expressions (\res) with shuffle, of size
$n$, an equivalent nondeterministic finite automaton (\NFA) needs at
most $2^n$ states, and presented a family of \res with shuffle, of
size ${\cal O}(n)$, for which the correspondent \NFAs have at least
$2^n$ states.  Gelade~\cite{gelade10:_succin_of_regul_expres_with},
and Gruber and
Holzer~\cite{gruber08:_finit_autom_digrap_connec_and,gruber10:_descr_and_algor_compl_of_regul_languag}
showed that there exists a double exponential trade-off in the
translation from \res with shuffle to stantard \res. Gelade also gave
a tight double exponential upper bound for the translation of \res
with shuffle to \DFAs.  Recently, conversions of shuffle expressions
to finite automata were presented by
Estrade~\cite{estrade06:_explic_paral_regul_expres} and Kumar and
Verma~\cite{kumar14:_novel_algor_for_conver_of}. In the latter paper
the authors give an algorithm for the construction of an
$\varepsilon$-free \NFA based on a classic Glushkov construction, and
the authors claim that the size of the resulting automaton is at most
$2^{m+1}$, where $m$ is the number of letters that occur in the \re
with shuffle.

In this paper we present a conversion of \res with shuffle to
$\varepsilon$-free \NFAs, by generalizing the partial derivative
construction for standard
\res~\cite{antimirov96:_partial_deriv_regul_expres_finit_autom_const,b.g.mirkin66:_algor_for_const_base_in}.
For standard \res, the partial derivative automaton ($\apd$) is a
quotient of the Glushkov automaton ($\apos$) and Broda \emph{et
  al.}~\cite{broda11:_averag_state_compl_of_partial_deriv_autom,broda12:_averag_size_of_glush_and}
showed that, asymptotically and on average, the size of $\apd$ is half
the size of $\apos$.  In the case of \res with shuffle we show that
the number of states of the partial derivative automaton is in the
worst-case $2^m$ (with $m$ as before) and an upper bound for the
average size is $(\frac43)^m$.

This paper is organized as follows. In the next section we review the
shuffle operation and regular expressions with shuffle. In
Section~\ref{sec:autseq} we consider equation systems, for languages
and expressions, associated to nondeterministic finite automata and
define a solution for a system of equations for a shuffle
expression. An alternative and equivalent construction, denoted by
$\apd$, is given in Section~\ref{sec:pd} using the notion of partial
derivative. An upper bound for the average number of states of $\apd$
using the framework of analytic combinatorics is given in
Section~\ref{sec:average}. We conclude in Section~\ref{sec:conclusion}
with some considerations about how to improve the presented upper
bound and related future work.

%\cite{kumar13:_novel_algor_for_conver_of}
%
%\cite{berstel10:_expres_power_of_shuff_produc}
%\cite{biegler09:_uniquen_of_shuff_words_and_finit_languag}
%\cite{milner89:_commun_and_concur}
%\cite{baeten90:_proces_algeb}

\section{Regular Expressions with Shuffle}
\label{sec:reshu}
Given an alphabet $\Sigma$, the shuffle of two words in $\Sigma^\star$ is a finite set of words defined inductively  as follows, for $x,y\in \Sigma^\star$ and $a,b\in \Sigma$
\begin{eqnarray*}
  x \shuffle \varepsilon = \varepsilon \shuffle x & = & \{x\} \label{eq:grammar}
\\
ax \shuffle by &=& \{\; az \mid z \in x \shuffle by\;\} \cup \{\; bz \mid z \in ax \shuffle y\;\}.
\end{eqnarray*}

This definition is extended to sets of words, i.e. languages, in the natural way:
$$L_1 \shuffle L_2 = \{\; x \shuffle y \mid x \in L_1, y \in L_2\;\}.$$

It is well known that if two languages $L_1, L_2\subseteq \Sigma^\star$ are regular then $L_\varepsilon \shuffle L_2$ is regular. One can extent regular expressions to include the $\shuffle$ operator.
Given an alphabet $\Ash$, we denote by $\tsh$ the set containing $\emptyset$ 
plus all terms finitely generated from $\Ash
  \cup \{\varepsilon\}$ and operators $+,\cdot,\shuffle,{}^*$, that is, the expressions $\tau$ generated by
the grammar
\begin{eqnarray} 
\tau & \rightarrow & \emptyset \mid \alpha \\
\alpha  & \rightarrow & \varepsilon \mid a \mid
\alpha + \alpha \mid \alpha \cdot \alpha \mid \alpha \shuffle \alpha
\mid \alpha ^\star \quad (a\in \Ash)  \label{gram:shuffle}.
\end{eqnarray} 

As usual, the (regular) language $\lang (\tau)$ represented by an
expression $\tau \in \tsh$ is inductively defined as follows:
$\lang(\emptyset) =\emptyset$, $\lang(\varepsilon) =\{\varepsilon \}$,
$\lang(a) = \{a\}$ for $a\in \Sigma$, $\lang(\alpha^\star) =
\lang(\alpha)^\star$, $\lang(\alpha + \beta) = \lang(\alpha) \cup
\lang(\beta)$, $\lang(\alpha \beta) =\lang(\alpha)\lang(\beta)$, and
$\lang(\alpha \shuffle \beta) = \lang(\alpha) \shuffle \lang(\beta).$
%%\begin{definition}
%\label{def:lang}
%$$
%\begin{array}{lll}
%\begin{array}{llll}
%\lang(\emptyset) &=& \emptyset  \\
%\lang(\varepsilon) &=& \{\varepsilon \}\\ 
%\lang(a) &=& \{a\}\\
%\lang(\alpha^\star) &=& \lang(\alpha)^\star
%\end{array} 
%&
%\quad
%&
%\begin{array}{llll}
%\lang(\alpha + \beta) &=& \lang(\alpha) \cup \lang(\beta)  \\
%\lang(\alpha \beta) &=& \lang(\alpha)\lang(\beta)  \\
%\lang(\alpha \shuffle \beta) &=& \lang(\alpha) \shuffle \lang(\beta).  \\
%\end{array}
%\end{array}$$
We say that two expressions $\tau_1, \tau_2 \in \tsh$ are equivalent, and write $\tau_1 \eqsem \tau_2$, if $\lang(\tau_1) = \lang(\tau_2)$.

\begin{example}
\label{ex:lang}
Consider $\alpha_n = a_1 \shuffle \cdots \shuffle a_n$, where $n \geq 1$, $a_i \not= a_j$ for $1  \leq i \not= j \leq n$. Then,
$$\lang(\alpha_n) = \{\; a_{i_1} \cdots a_{i_n} \mid i_1, \ldots ,i_n \mbox{ is a permutation of } 1,\ldots,n \}.$$
\end{example}

The shuffle operator $\shuffle$ is commutative, associative, and distributes over $+$. It also follows that for all $a,b\in  \Sigma$ and $\tau_1,\tau_2\in \tsh$, $$a\tau_1\shuffle b\tau_2=a(\tau_1\shuffle b\tau_2)+b(a\tau_1\shuffle \tau_2).$$

 Given a language $L$, we define $\varepsilon(\tau)=\varepsilon(\lang(\tau))$, where, $\varepsilon(L)=\varepsilon$ if $\varepsilon \in L$ and
$\varepsilon(L)=\emptyset$ otherwise.
 A recursive definition of
$\varepsilon : \tsh \longrightarrow \{\emptyset,\varepsilon\}$ is given by the
following: $ \varepsilon(a) =
\varepsilon(\emptyset) =\emptyset$, 
$\varepsilon(\varepsilon) =
\varepsilon(\alpha^\ast) =\varepsilon$,
$\varepsilon(\alpha+\beta) = \varepsilon(\alpha) + \varepsilon(\beta)$,
$\varepsilon(\alpha \beta) =\varepsilon(\alpha) \varepsilon(\beta)$, and
$\varepsilon(\alpha \shuffle \beta) = \varepsilon(\alpha) 
\varepsilon(\beta).$

%$$\begin{array}{llllll}
%\begin{array}{lcr}
%  \varepsilon(a) =
%\varepsilon(\emptyset) &=&\emptyset \\
%\varepsilon(1) &=&1 \\
%\varepsilon(\alpha^\ast) &=& 1
%\end{array}\ \ 
%&&&&&
%\begin{array}{lcr}
%\varepsilon(\alpha+\beta) &=& \varepsilon(\alpha) + \varepsilon(\beta) \\
%\varepsilon(\alpha \beta) &=& \varepsilon(\alpha) \varepsilon(\beta) \\
%\varepsilon(\alpha \shuffle \beta) &=& \varepsilon(\alpha) 
%\varepsilon(\beta). \\
%\end{array}
%\end{array}
%$$
%We generalize $\varepsilon$ for sets $S \subseteq \tska$ by $\varepsilon(S)=\sum_{\alpha \in S} \varepsilon(\alpha)$.

\section{Automata and Systems of Equations}
\label{sec:autseq}
We first recall the definition of a \NFA as a tuple $\nfa = \langle S, \Sigma , S_0, \delta, F \rangle$, where $S$ is a finite set of states, $\Sigma$ is a finite alphabet, $S_0 \subseteq S$ a set of initial states, $\delta: S \times \Sigma \longrightarrow {\cal P}(S)$ the transition function, and $F \subseteq S$ a set of final states.
The extension of $\delta$ to sets of states and words is defined by $\delta(X,\varepsilon) = X$ and $\delta(X,a x) = \displaystyle{\delta(\cup_{s \in X}\delta(s,a),x)}$.  A word $x \in\Sigma^\ast$ is accepted by $\nfa$ if and only if $\delta(S_0,x) \cap F \neq \emptyset$.  The \emph{language of $\nfa$} is the set of words accepted by $\nfa$ and denoted by $\lang(\nfa)$. The \emph{right language of a state} $s$, denoted by $\lang_s$, is the language accepted by $\nfa$ if we take $S_0 =\{s\}$. The class of languages accepted by all the $\NFAs$ is precisely  the set of regular languages. 

It is well known that, for each $n$-state \NFA $\nfa$, over
$\Sigma=\{a_1, \ldots, a_k \}$, having right languages $\lang_1,
\ldots ,\lang_n$, it is possible to associate a system of linear
language equations
  \begin{eqnarray*}
\label{eq:system}
\lang_i &=& a_1  \lang_{1i} \cup\cdots \cup a_k  \lang_{ik} \cup\varepsilon(\lang_i), \quad i\in [1,n]
  \end{eqnarray*}
where each $\lang_{ij}$ is a (possibly empty) union of elements in $\{ \lang_1, \ldots, \lang_n\}$, and $\lang_1=\lang(\nfa)$.

In the same way, it is possible to associate to each regular
expression a system of equations on expressions. We here extend this
notion to regular expressions with shuffle.

\begin{definition}
  Consider $\Sigma= \{a_1,\ldots,a_k\}$ and $\alpha_0 \in
  \tsh$. A \emph{support} of $\alpha_0$ is a set
  $\{\alpha_1,\ldots,\alpha_n\}$ that satisfies a system
  of equations
  \begin{eqnarray}
\label{eq:prebase}
\alpha_i &\eqsem& a_1 \alpha_{1i} + \cdots + a_k \alpha_{ki} + \varepsilon(\alpha_i), \quad i\in [0,n]
  \end{eqnarray}
 for some $\alpha_{1i}, \ldots, \alpha_{ki}$, each one a (possibly
  empty) sum of elements in $\{\alpha_1,\ldots,\alpha_n\}$.
  In this case
  $\{\alpha_0, \alpha_1,\ldots,\alpha_n\}$ is called a \emph{prebase}
  of $\alpha_0$.
\end{definition}

It is clear from what was just said above, that the existence of a
support of $\alpha$ implies the existence of an \NFA that accepts the
language determined by $\alpha$.  

Note that the system of equations~\eqref{eq:prebase} can be written in
matrix form $\Amat_\alpha \eqsem\Cmat \cdot \Mmat_\alpha +
\Emat_\alpha,$ where $\Mmat_\alpha$ is the $k\times (n+1)$ matrix with
entries $\alpha_{ij}$, and $\Amat_\alpha$, $\Cmat$ and $\Emat_\alpha$
denote respectively the following three matrices,
$$\Amat_\alpha =\begin{bmatrix} \alpha_0  & \cdots &
  \alpha_n \end{bmatrix}, \quad \Cmat =\begin{bmatrix} a_1 & \cdots
  & a_k \end{bmatrix}, \quad \mbox{and} \quad \Emat_\alpha
=\begin{bmatrix} \varepsilon(\alpha_0) & \cdots &
  \varepsilon(\alpha_n) \end{bmatrix},$$ 
where, $\Cmat \cdot \Mmat_\alpha$ denotes the matrix obtained from $\Cmat$ and
$\Mmat_\alpha$ applying the standard rules of matrix multiplication,
but replacing the multiplication by the concatenation. This notation
will be use below.

A support for an expression $\alpha \in \tsh$ can be computed using
the function $\pi : \tsh \longrightarrow {\cal P}(\tsh)$ recursively
given by the following.
%The proof of the correctness of
%this definition provides us simultaneously with the definition of a
%corresponding equation system as in (\ref{eq:prebase}).

\begin{definition}
\label{def:pi}
Given $\tau \in \tsh$, the set
    $\pi(\tau)$ is
  inductively defined by,
  \begin{center}$ 
%  \begin{array}{llllllll}
\begin{array}[t]{rcl}
    \pi(\emptyset)&=&\pi(\varepsilon) = \emptyset \\
    \pi(a) &=& \{ \varepsilon\} \quad (a \in \Ash)\\
     \pi(\alpha^*) &=& \pi(\alpha)\nodot  \alpha^*
      \end{array}\qquad \qquad 
     \begin{array}[t]{rcl}
      \pi(\alpha+\beta) &=& \pi(\alpha) \cup \pi(\beta) \\
    \pi(\alpha \beta) &=& \pi(\alpha) \nodot \beta \cup \pi(\beta)
    \\
    \pi(\alpha \shuffle \beta) &=& \pi(\alpha) \shuffle \pi(\beta) \\
&\cup&
    \pi(\alpha) \shuffle \{ \beta \} \cup \{ \alpha \} \shuffle \pi(\beta),
%\end{array}
  \end{array}$
  \end{center}
where, given $S,T \subseteq \tsh$ and $\beta\in \tsh\setminus\{\emptyset, \varepsilon\}$, $S \beta = \{\; \alpha\beta \mid \alpha \in S\;\}$ and $S \shuffle T = \{\; \alpha \shuffle \beta \mid \alpha \in S, \beta \in T\;\}$, $S \varepsilon = \{\varepsilon\} \shuffle S = S \shuffle \{ \varepsilon\} = S$, and $S\emptyset=\emptyset S=\emptyset$.
\end{definition}

The following lemma follows directly from the definitions~%\ref{def:lang} 
and will be used in the proof of Proposition~\ref{prop:pi}.

\begin{lemma}
\label{lemma:eps_shuffle}
  If $\alpha, \beta \in \tsh$, then $\varepsilon(\beta) \cdot \lang(\alpha) \subseteq \lang(\alpha \shuffle \beta)$.
\end{lemma}

\begin{proposition}
\label{prop:pi}
  If $\alpha \in \tsh$, then the set $\pi(\alpha)$
is a support of $\alpha$.
\end{proposition}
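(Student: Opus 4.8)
The plan is to prove, by structural induction on $\alpha \in \tsh$, the stronger statement that $\pi(\alpha)$ is a support of $\alpha$, meaning that $\pi(\alpha)$ together with $\alpha$ satisfies a system of equations of the form~\eqref{eq:prebase} where each right-hand side uses only sums of elements of $\pi(\alpha)$. To make the induction go through, I would first establish a ``derivation'' operation on sets: for $S \subseteq \tsh$ and $a \in \Sigma$, define the $a$-component of $S$ so that each $\gamma \in S$ contributes its linear part with respect to $a$. Concretely, the key bookkeeping fact I need is that for every $\gamma \in \tsh$ there are expressions, each a sum of elements of $\pi(\gamma)$, realizing $\gamma \eqsem \sum_{j} a_j \gamma_j + \varepsilon(\gamma)$; and then I must check that as $\gamma$ ranges over $\pi(\alpha)$ (and over $\alpha$ itself), all the $\gamma_j$ that appear lie in $\pi(\alpha)$. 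So the real content is the ``closure'' property: $\pi(\alpha)$ is closed under taking linear components, and $\pi(\alpha)$ contains the components of $\alpha$.

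The base cases $\emptyset, \varepsilon, a$ are immediate from Definition~\ref{def:pi} (for $a$, the equation is $a \eqsem a\,\varepsilon + \varepsilon(a)$ with $\varepsilon \in \pi(a)$). For the inductive step I would treat the four operators in turn, using the recursive clauses of $\pi$ and the algebraic identities already recorded in the paper. For $\alpha + \beta$: the components of $\alpha+\beta$ are componentwise sums of those of $\alpha$ and $\beta$, and $\pi(\alpha+\beta) = \pi(\alpha) \cup \pi(\beta)$ absorbs them by the induction hypothesis. For $\alpha\beta$: if $\alpha \eqsem \sum a_j \alpha_j + \varepsilon(\alpha)$ then $\alpha\beta \eqsem \sum a_j (\alpha_j \beta) + \varepsilon(\alpha)\beta$, and $\varepsilon(\alpha)\beta$ is either $\emptyset$ or $\beta$; combined with the components of $\beta$ (reached only when $\varepsilon(\alpha) = \varepsilon$) this matches $\pi(\alpha\beta) = \pi(\alpha)\beta \cup \pi(\beta)$, and one also checks the components of an element $\alpha_j\beta \in \pi(\alpha)\beta$ stay inside $\pi(\alpha\beta)$. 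For $\alpha^\star$: from $\alpha \eqsem \sum a_j\alpha_j + \varepsilon(\alpha)$ one gets $\alpha^\star \eqsem \sum a_j(\alpha_j\alpha^\star) + \varepsilon$, so $\pi(\alpha^\star) = \pi(\alpha)\alpha^\star$ works, again checking closure under further components.

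The shuffle case is where I expect the main obstacle, and it is the reason Lemma~\ref{lemma:eps_shuffle} was stated. Using the identity $a\tau_1 \shuffle b\tau_2 = a(\tau_1 \shuffle b\tau_2) + b(a\tau_1 \shuffle \tau_2)$ together with commutativity/associativity/distributivity of $\shuffle$ over $+$, I would show that if $\alpha \eqsem \sum_j a_j \alpha_j + \varepsilon(\alpha)$ and $\beta \eqsem \sum_j a_j \beta_j + \varepsilon(\beta)$, then
\begin{equation*}
\alpha \shuffle \beta \eqsem \sum_j a_j\bigl((\alpha_j \shuffle \beta) + (\alpha \shuffle \beta_j)\bigr) + \varepsilon(\alpha)\varepsilon(\beta),
\end{equation*}
where terms involving $\varepsilon(\alpha)\shuffle\beta_j$ or $\alpha_j\shuffle\varepsilon(\beta)$ are handled by Lemma~\ref{lemma:eps_shuffle} (which guarantees $\varepsilon(\beta)\lang(\alpha_j) \subseteq \lang(\alpha_j \shuffle \beta)$, so dropping or keeping the $\varepsilon$-factor does not change the language while keeping the expression inside the prescribed set). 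Now each $\alpha_j \shuffle \beta$ lies in $\pi(\alpha)\shuffle\{\beta\}$ and each $\alpha \shuffle \beta_j$ lies in $\{\alpha\}\shuffle\pi(\beta)$, both subsets of $\pi(\alpha\shuffle\beta)$; and for a typical element $\gamma_1 \shuffle \gamma_2 \in \pi(\alpha\shuffle\beta)$ with $\gamma_1 \in \pi(\alpha)\cup\{\alpha\}$ and $\gamma_2 \in \pi(\beta)\cup\{\beta\}$, its components are, by the same computation and the induction hypothesis, again of the form $\gamma_1' \shuffle \gamma_2$ or $\gamma_1 \shuffle \gamma_2'$ with $\gamma_1' \in \pi(\alpha)$, $\gamma_2' \in \pi(\beta)$, hence again in $\pi(\alpha\shuffle\beta)$. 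The delicate points to get right are the correct treatment of the $\emptyset$ and $\varepsilon$ conventions in Definition~\ref{def:pi} (so that e.g. $S\shuffle\{\varepsilon\} = S$ does not spuriously enlarge or shrink the set) and making sure the $\varepsilon$-parts $\varepsilon(\gamma_1)\shuffle\gamma_2$ introduced when a component is $\varepsilon$ are absorbed via Lemma~\ref{lemma:eps_shuffle}; once those are handled, the induction closes.
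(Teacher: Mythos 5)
Your proposal is correct and follows essentially the same route as the paper: structural induction where the only genuinely new case is $\shuffle$, handled via the identity $\alpha\shuffle\beta \eqsem \sum_j a_j\bigl((\alpha_j\shuffle\beta)+(\alpha\shuffle\beta_j)\bigr)+\varepsilon(\alpha)\varepsilon(\beta)$ with Lemma~\ref{lemma:eps_shuffle} absorbing the $\varepsilon(\beta)\alpha_j$ and $\varepsilon(\alpha)\beta_j$ terms, and with closure checked uniformly over all pairs $\gamma_1\shuffle\gamma_2$ drawn from $(\pi(\alpha)\cup\{\alpha\})\times(\pi(\beta)\cup\{\beta\})$. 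The only difference is that you sketch the non-shuffle cases explicitly whereas the paper defers them to Mirkin and Champarnaud--Ziadi.
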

\begin{proof}
  We proceed by induction on the structure of $\alpha$. Excluding the
  case where $\alpha$ is $\alpha_0 \shuffle \beta_0$, the proof can be
  found
  in~\cite{b.g.mirkin66:_algor_for_const_base_in,champarnaud01:_from_mirkin_prebas_to_antim}.
  We now describe how to obtain a system of equations corresponding to
  an expression $\alpha_0 \shuffle \beta_0$ from systems for $\alpha_0$
  and $\beta_0$.  Suppose that 
$\pi(\alpha_0) =\{\alpha_1,\ldots,\alpha_n\}$ is a support of $\alpha_0$ and 
$\pi(\beta_0) =  \{\beta_1,\ldots,\beta_m\}$ is a support of $\beta_0$. For $\alpha_0$ and $\beta_0$ consider $\Cmat$,
  $\Amat_{\alpha_0}$, $\Mmat_{\alpha_0}$,
  $\Emat_{\alpha_0}$ and $\Amat_{\beta_0}$,
  $\Mmat_{\beta_0}$, $\Emat_{\beta_0}$ as above.
We wish to show that
\begin{eqnarray*}\pi(\alpha_0 \shuffle \beta_0 ) &= &\{ \alpha_1 \shuffle \beta_1,
\ldots,\alpha_1 \shuffle \beta_m,\ldots,\alpha_n \shuffle \beta_1,\ldots,\alpha_n
\shuffle \beta_m \}\\
&\cup &\{ \alpha_1 \shuffle \beta_0,
\ldots,\alpha_n \shuffle \beta_0 \} \cup \{ \alpha_0 \shuffle \beta_1,
\ldots,\alpha_0 \shuffle \beta_m \}
\end{eqnarray*}
is a support of $\alpha_0 \shuffle \beta_0$. Let
$\Amat_{\alpha_0 \shuffle \beta_0 }$ be the $(n+1)(m+1)$-entry row-matrix whose entires
are
\begin{eqnarray*}
 \begin{bmatrix} \alpha_0 \shuffle \beta_0 \ &\alpha_1 \shuffle \beta_1&
\cdots
& \alpha_n \shuffle \beta_m &\alpha_1\shuffle \beta_0&\cdots &\alpha_n\shuffle \beta_0 &\alpha_0 \shuffle \beta_1&\cdots &\alpha_0 \shuffle \beta_m\end{bmatrix}.
\end{eqnarray*}
Then, $\Emat_{\alpha_0 \shuffle \beta_0}$ is defined as usual, i.e.~containing the values of $\varepsilon(\alpha)$ for all entries $\alpha$ in $\Amat_{\alpha_0 \shuffle \beta_0 }$.

Finally, let $ \Mmat_{\alpha_0\shuffle \beta_0}$ be the $k \times (n+1)(m+1)$ matrix whose
entries $\gamma_{l,(i,j)}$, for $l \in [1,k]$ and $(i,j) \in [0,n] \times [0,m]$, are defined by
\begin{eqnarray*}
  \gamma_{l,(i,j)} \eqsynt \alpha_{li}{\shuffle} \beta_{j}  + 
\alpha_i {\shuffle} \beta_{lj} .
\end{eqnarray*}
%where $(\eta_1 + \cdots + \eta_p) \; {\shuffle}\; \tau = \eta_1 \shuffle \tau + \cdots + \eta_p \shuffle \tau$ and $\tau \; \overline{\shuffle}\; (\eta_1 + \cdots + \eta_p) = \tau \shuffle \eta_1 + \cdots + \tau \shuffle \eta_p$.

Note that, since by the induction hypothesis each $\alpha_{li}$ is a sum of elements in $\pi(\alpha)$  and each $\beta_{lj}$ is a sum of elements in $\pi(\beta)$, after applying distributivity of $\shuffle$ over $+$ each element of $\Mmat_{\alpha_0\shuffle \beta_0}$ is in fact a sum of elements in $\pi(\alpha_0 \shuffle \beta_0)$.  We will show that $\Amat_{\alpha_0 \shuffle\beta_0 }\eqsem \Cmat \cdot \Mmat_{\alpha_0\shuffle\beta_0} + \Emat_{\alpha_0 \shuffle \beta_0}$. For this, consider $\alpha_i \shuffle \beta_j$ for some  $(i,j) \in [0,n] \times [0,m]$. We have 
$ \alpha_i \eqsem a_1\alpha_{1i} + \cdots + a_k  \alpha_{ki} + \varepsilon(\alpha_i)$ and
 $\beta_j \eqsem a_1  \beta_{1j} + \cdots + a_k  \beta_{kj} + \varepsilon(\beta_j).$
Consequently, using properties of $\shuffle$, namely distributivity over $+$, as well as Lemma~\ref{lemma:eps_shuffle}, 

\begin{eqnarray*}
\alpha_i \shuffle \beta_j &\eqsem& (a_1 \alpha_{1i} + \cdots + a_k 
  \alpha_{ki} + \varepsilon(\alpha_i)) \shuffle (a_1 \beta_{1j} + \cdots + a_k 
  \beta_{kj} + \varepsilon(\beta_j))\\
&\eqsem& a_1  \left (\alpha_{1i} \shuffle \beta_j + \alpha_i \shuffle \beta_{1j} + \varepsilon(\beta_j) \alpha_{1i}+
\varepsilon(\alpha_i)\beta_{1j}\right ) \,\, +\,\,  \cdots \,\, + \\
&& a_k  \left (\alpha_{ki} \shuffle \beta_j + \alpha_i \shuffle \beta_{kj} + \varepsilon(\beta_j) \alpha_{ki}+
\varepsilon(\alpha_i)\beta_{kj}\right ) 
+ \varepsilon(\alpha_i \shuffle \beta_j) \\
&\eqsem& a_1  \left (\alpha_{1i} \shuffle \beta_j + \alpha_i \shuffle \beta_{1j} 
\right ) \,\, +\,\,  \cdots \,\, + \\
&& a_k  \left (\alpha_{ki} \shuffle \beta_j + \alpha_i \shuffle \beta_{kj} \right ) 
+ \varepsilon(\alpha_i \shuffle \beta_j) \\
&\eqsem& a_1  \gamma_{1,(i,j)} + \cdots +  a_k \gamma_{k,(i,j)} + \varepsilon(\alpha_i \shuffle \beta_j).
  \end{eqnarray*} \qed
\end{proof}

It is clear from its definition that $\pi(\alpha)$ is finite. In the following proposition, an upper bound for the size of $\pi(\alpha)$ is given. Example~\ref{ex:tight} is a witness that this upper bound is tight.

\begin{proposition} \label{proposition:pdfinite}
Given $\alpha \in \tsh$, one has $|\pi(\alpha)| \leq 2^{|\alpha|_\Sigma}-1$, where $|\alpha|_\Sigma$ denotes the number of alphabet symbols in $\alpha$.
\end{proposition}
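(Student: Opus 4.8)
The plan is to prove the bound by structural induction on $\alpha$, following the recursive clauses of Definition~\ref{def:pi}; write $p = |\alpha|_\Sigma$ and $q = |\beta|_\Sigma$ when $\alpha,\beta$ are the immediate subexpressions, and recall that $|\cdot|_\Sigma$ is additive over $+,\cdot,\shuffle$ and invariant under $\kleene{\nodot}$. The base cases are immediate: $|\pi(\emptyset)| = |\pi(\varepsilon)| = 0 = 2^0 - 1$ and $|\pi(a)| = |\{\varepsilon\}| = 1 = 2^1 - 1$. For the star case, $\pi(\kleene{\alpha}) = \pi(\alpha)\nodot\kleene{\alpha}$ has at most $|\pi(\alpha)|$ elements and $|\kleene{\alpha}|_\Sigma = |\alpha|_\Sigma$, so the induction hypothesis closes it directly.

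For sum and concatenation, both $\pi(\alpha+\beta) = \pi(\alpha)\cup\pi(\beta)$ and $\pi(\alpha\beta) = \pi(\alpha)\nodot\beta \cup \pi(\beta)$ have cardinality at most $|\pi(\alpha)| + |\pi(\beta)|$, using $|S\nodot\beta| \le |S|$ and the crude union bound. By the induction hypothesis this is at most $(2^p-1)+(2^q-1)$, and it then remains to check the elementary inequality $2^p + 2^q - 2 \le 2^{p+q}-1$, equivalently $2^p + 2^q - 1 \le 2^{p+q}$, which is routine by case analysis on whether $p$ or $q$ is zero (and is strict once both are positive). Since $|\alpha+\beta|_\Sigma = |\alpha\beta|_\Sigma = p+q$, this yields the claim.

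The shuffle case is the one requiring a little care, but it is in fact the cleanest once set up correctly. From
$$\pi(\alpha\shuffle\beta) \;=\; \pi(\alpha)\shuffle\pi(\beta)\;\cup\;\pi(\alpha)\shuffle\{\beta\}\;\cup\;\{\alpha\}\shuffle\pi(\beta)$$
and the bounds $|S\shuffle T| \le |S|\cdot|T|$, $|S\shuffle\{\beta\}| \le |S|$, $|\{\alpha\}\shuffle T| \le |T|$, one gets
$$|\pi(\alpha\shuffle\beta)| \;\le\; |\pi(\alpha)|\cdot|\pi(\beta)| + |\pi(\alpha)| + |\pi(\beta)| \;=\; \bigl(|\pi(\alpha)|+1\bigr)\bigl(|\pi(\beta)|+1\bigr) - 1.$$
Since the right-hand side is monotone in each factor, plugging in the induction hypotheses $|\pi(\alpha)| \le 2^p-1$ and $|\pi(\beta)| \le 2^q-1$ gives $|\pi(\alpha\shuffle\beta)| \le 2^p\cdot 2^q - 1 = 2^{p+q}-1$, and $|\alpha\shuffle\beta|_\Sigma = p+q$ finishes the proof.

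The only genuine \emph{obstacle} is recognizing that the three-term estimate in the shuffle case collapses to the product $\bigl(|\pi(\alpha)|+1\bigr)\bigl(|\pi(\beta)|+1\bigr) - 1$, after which no estimation beyond the induction hypothesis is needed; the bookkeeping in the $+$ and $\cdot$ cases is the only place where a (trivial) numerical inequality has to be verified, and everything else is bounded term by term. I would present the numerical inequality $2^p + 2^q - 1 \le 2^{p+q}$ once as a small lemma and reuse it in both the $+$ and $\cdot$ cases.
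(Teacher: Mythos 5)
Your proof is correct and follows essentially the same route as the paper: structural induction with the base cases, the numerical inequality $2^p+2^q-2\leq 2^{p+q}-1$ for the $+$ and $\cdot$ cases, and the identity $xy+x+y=(x+1)(y+1)-1$ for the shuffle case (which the paper computes directly as $(2^p-1)(2^q-1)+2^p-1+2^q-1=2^{p+q}-1$). Your observation about monotonicity in the shuffle case is a clean way to justify substituting the inductive upper bounds, but it does not change the substance of the argument.
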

\begin{proof}

The proof proceeds by induction on the structure of $\alpha$. It is clear that the result holds for $\alpha=\emptyset$, $\alpha=\varepsilon$ and for $\alpha= a \in \Ash$. Now, suppose the claim is true for $\alpha$ and $\beta$. There are four induction cases to consider. We will make use of the fact that, for $m,n \geq 0$ one has $2^m+2^n-2 \leq 2^{m+n}-1$. For $\alpha^\star$, one has $|\pi(\alpha^\star)|  = |\pi(\alpha) \alpha^\star| = |\pi(\alpha)| \leq 2^{|\alpha|_\Sigma} -1 =  2^{|\alpha^\star|_\Sigma}-1$.
For $\alpha + \beta$, one has $|\pi(\alpha + \beta)|  = |\pi(\alpha) \cup \pi(\beta)| \leq 2^{|\alpha|_\Sigma} -1 + 2^{|\beta|_\Sigma} -1 \leq  2^{|\alpha|_\Sigma + |\beta|_\Sigma} -1 =  2^{|\alpha +\beta|_\Sigma} -1$.
For $\alpha \beta$, one has $|\pi(\alpha  \beta)|  = |\pi(\alpha)\beta \cup \pi(\beta)| \leq 2^{|\alpha|_\Sigma} -1 + 2^{|\beta|_\Sigma} -1 \leq  2^{|\alpha \beta|_\Sigma} -1$.
Finally, for $\alpha \shuffle \beta$, one has $|\pi(\alpha \shuffle \beta)|  
= |\pi(\alpha)\shuffle \pi(\beta) \cup \pi(\alpha) \shuffle \{ \beta \} \cup \{ \alpha \} \shuffle \pi(\beta)| \leq  (2^{|\alpha|_\Sigma}-1)(2^{|\beta|_\Sigma}-1) + 2^{|\alpha|_\Sigma} -1+ 2^{|\beta|_\Sigma}-1 =  2^{|\alpha|_\Sigma +|\beta|_\Sigma} -1 = 2^{|\alpha \shuffle \beta|_\Sigma} -1$.
\qed

\end{proof}

\begin{example}
\label{ex:tight}
Considering again $\alpha_n = a_1 \shuffle \cdots \shuffle a_n$, where
$n \geq 1$, $a_i \not= a_j$ for $1 \leq i \not= j \leq n$, one has
$$|\pi(\alpha_n)| = |\{\; \underset{i \in I}{\shuffle} a_i \mid I
\subsetneq \{1,\ldots, n\} \;\}| = 2^n-1,$$
where we consider $ \underset{i \in \emptyset}{\shuffle} a_i = 1$.
\end{example}

The proof of Proposition~\ref{prop:pi} gives a way to construct a
system of equations for an expression $\tau \in \tsh$, corresponding
to an \NFA that accepts the language represented by $\tau$.  This is
done by recursively computing $\pi(\tau)$ and the matrices
$\Amat_\tau$ and $\Emat_\tau$, obtaining the whole \NFA in the final
step.

In the next section we will show how to build the same \NFA in a more
efficient way using the notion of partial derivative.

\section{Partial Derivatives}
\label{sec:pd}
Recall that the \emph{left-quotient} of a language $L$ w.r.t.~a symbol
$a \in \Ash$ is $$a^{-1}L = \{\;x \mid ax \in L\;\}.$$ The left
quotient of $L$ w.r.t.~a word $x \in \Ash^\star$ is then inductively
defined by $\varepsilon^{-1}L=L$ and $(xa)^{-1}L =
a^{-1}(x^{-1}L)$. Note that for $L_1, L_2 \subseteq \Sigma^\star$ and
$a,b \in \Sigma$ the shuffle operation satisfies $a^{-1}(L_1 \shuffle
L_2) = (a^{-1} L_1) \shuffle L_2 \;\cup \; L_1 \shuffle (a^{-1} L_2).$

\begin{definition}
\label{def:pd-shuffle}
The set of partial derivatives of a term $\tau \in \tsh$ w.r.t.~a
letter $a \in \Ash$, denoted by $\partial_{a}(\tau)$, is inductively
defined by

\begin{equation*}
  \begin{array}{rcl}
   \partial_ {a}(\emptyset)&=&\partial_  {a}(\varepsilon)  = \emptyset \\
    \partial_  {a}(b) &=& \begin{cases} \{ \varepsilon\} \text { if }  b =a \\ \emptyset \text{ otherwise }  \end{cases} 
    \end{array}
    \qquad\qquad
    \begin{array}{rcl}
        \partial_{a}(\alpha^*) &=& \partial_{a}(\alpha) \nodot \alpha^* \\
   \partial_ { a}(\alpha+\beta) &=& \partial_  {a}(\alpha) \, \cup \, \partial_  {a}(\beta) \\
    \partial_  {a}(\alpha \beta) &=& \partial_{a}(\alpha) \nodot \beta\,\, \cup \,\, \varepsilon(\alpha)\nodot  \partial_{a}(\beta)\\
      \partial_{a} (\alpha \shuffle \beta) &=&  \partial_{a} (\alpha) \shuffle \{\beta\} \cup \{\alpha\} \shuffle \partial_{a}(\beta).
    \end{array}
\end{equation*}
The set of partial derivatives of $\tau \in \tsh$ w.r.t.~a word $x \in
\Ash^*$ is inductively defined  by $\partial_\varepsilon(\tau)=\{
\tau\}$ and $ \partial_{x a }(\tau)
= \partial_a(\partial_x(\tau))$, where, given a set $S \subseteq \tsh$, $\partial_a(S)= \bigcup_{\tau \in
  S} \partial_{a}(\tau)$.   
\end{definition}

We denote by $\partial(\tau)$ the set of all partial derivatives of an expression $\tau$, i.e.~$\partial(\tau) = \bigcup_{x \in \Sigma^* } \partial_x(\tau)$, and by $\partial^+(\tau)$ the set of partial derivatives excluding the trivial derivative by $\varepsilon$, i.e.~$\partial^+(\tau) = \bigcup_{x \in \Sigma^+} \partial_x(\tau)$. Given a set $S \subseteq \tsh$, we define $\lang(S) = \bigcup_{\tau \in S} \lang(\tau)$. The following result has a straightforward proof.

\begin{proposition}
\label{prop:shufflequotientproperty}
Given $x \in \Ash^\star$ and $\tau \in \tsh$, one has
 $\lang(\partial_x(\tau)) = x^{-1}\lang(\tau)$.
\end{proposition}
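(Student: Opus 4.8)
The plan is to prove $\lang(\partial_x(\tau)) = x^{-1}\lang(\tau)$ by induction on the length of $x$, reducing the general case to the single-letter case, which in turn is proved by induction on the structure of $\tau$.

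First I would handle the base case $x = \varepsilon$: here $\partial_\varepsilon(\tau) = \{\tau\}$ by definition, so $\lang(\partial_\varepsilon(\tau)) = \lang(\tau) = \varepsilon^{-1}\lang(\tau)$, which is immediate. For the inductive step, write $x = ya$ with $a \in \Sigma$. By definition $\partial_{ya}(\tau) = \partial_a(\partial_y(\tau)) = \bigcup_{\sigma \in \partial_y(\tau)} \partial_a(\sigma)$, so
\begin{equation*}
\lang(\partial_{ya}(\tau)) = \bigcup_{\sigma \in \partial_y(\tau)} \lang(\partial_a(\sigma)).
\end{equation*}
If I can establish the single-letter claim $\lang(\partial_a(\sigma)) = a^{-1}\lang(\sigma)$ for every $\sigma \in \tsh$, then the right-hand side equals $\bigcup_{\sigma \in \partial_y(\tau)} a^{-1}\lang(\sigma) = a^{-1}\bigl(\bigcup_{\sigma \in \partial_y(\tau)}\lang(\sigma)\bigr) = a^{-1}\lang(\partial_y(\tau))$, using that left-quotient commutes with arbitrary unions. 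By the induction hypothesis on $|y|$ this is $a^{-1}(y^{-1}\lang(\tau)) = (ya)^{-1}\lang(\tau) = x^{-1}\lang(\tau)$, completing the step.

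The remaining work is the single-letter lemma $\lang(\partial_a(\sigma)) = a^{-1}\lang(\sigma)$, proved by structural induction on $\sigma$ following Definition~\ref{def:pd-shuffle}. The cases $\sigma \in \{\emptyset, \varepsilon, b\}$ are direct from the definitions of $\partial_a$ and of left-quotient; the cases $\sigma = \alpha + \beta$, $\sigma = \alpha\beta$ and $\sigma = \alpha^\star$ are the standard Antimirov computations and use the well-known quotient identities $a^{-1}(L_1 \cup L_2) = a^{-1}L_1 \cup a^{-1}L_2$, $a^{-1}(L_1 L_2) = (a^{-1}L_1)L_2 \cup \varepsilon(L_1)(a^{-1}L_2)$, and $a^{-1}(L^\star) = (a^{-1}L)L^\star$, matching term-for-term the recursive clauses for $\partial_a$. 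The shuffle case $\sigma = \alpha \shuffle \beta$ is the genuinely new one: here $\partial_a(\alpha \shuffle \beta) = \partial_a(\alpha)\shuffle\{\beta\} \cup \{\alpha\}\shuffle\partial_a(\beta)$, so by the induction hypothesis and the fact (noted just before Definition~\ref{def:pd-shuffle}) that $\lang(S \shuffle T) = \lang(S) \shuffle \lang(T)$ componentwise, $\lang(\partial_a(\alpha\shuffle\beta)) = (a^{-1}\lang(\alpha))\shuffle\lang(\beta) \cup \lang(\alpha)\shuffle(a^{-1}\lang(\beta))$, which is exactly the quotient identity for shuffle, $a^{-1}(L_1 \shuffle L_2) = (a^{-1}L_1)\shuffle L_2 \cup L_1 \shuffle (a^{-1}L_2)$, recalled in the same paragraph.

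I expect no serious obstacle: the proof is routine once the two quotient identities for concatenation/star and for shuffle are in hand, and both are already recalled in the text. The only point requiring a little care is keeping the bookkeeping straight between sets of expressions and the languages they denote — in particular using $\lang(S \shuffle T) = \bigcup_{\alpha \in S,\beta \in T}\lang(\alpha)\shuffle\lang(\beta)$ and $\lang(S\beta) = \bigcup_{\alpha\in S}\lang(\alpha)\lang(\beta)$ so that the structural recursion on $\partial_a$ lines up cleanly with the recursion on left-quotients. This is exactly the "straightforward" content the statement advertises.
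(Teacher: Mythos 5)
Your argument is correct and is exactly the ``straightforward'' proof the paper alludes to but omits: induction on $|x|$ reducing to the single-letter case, then structural induction on $\tau$ using the quotient identities for $\cup$, concatenation, star, and the shuffle identity $a^{-1}(L_1 \shuffle L_2) = (a^{-1}L_1)\shuffle L_2 \cup L_1 \shuffle (a^{-1}L_2)$ recalled at the start of Section~\ref{sec:pd}. All steps check out, including the bookkeeping between sets of expressions and their languages.
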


The following properties of  $\partial^+(\tau)$ will be used in the proof of Proposition~\ref{prop:pd_recursion}.

\begin{lemma}
\label{lemma:pdcontainseps}
For $\tau \in \tsh$, the following hold.
\begin{enumerate}
\item If $\partial^+(\tau) \not= \emptyset$, then there is $\alpha_0 \in \partial^+(\tau)$ with $\varepsilon(\alpha_0) = \varepsilon$.
\item If $\partial^+(\tau) = \emptyset$ and $\tau \not=\emptyset$, then $\lang(\tau) = \{ \varepsilon \}$ and $\varepsilon(\tau)=\varepsilon$.
\end{enumerate}
\end{lemma}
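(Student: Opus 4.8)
The plan is to prove both statements by structural induction on $\tau$, exploiting the recursive definition of $\partial_a$ together with the recursive characterization of $\varepsilon(\cdot)$. For the base cases, $\tau=\emptyset$ gives $\partial^+(\tau)=\emptyset$ and $\tau=\emptyset$, so there is nothing to check; $\tau=\varepsilon$ gives $\partial^+(\tau)=\emptyset$, $\lang(\tau)=\{\varepsilon\}$ and $\varepsilon(\tau)=\varepsilon$, which settles item~2; and $\tau=a$ gives $\partial^+(a)=\{\varepsilon\}$ with $\varepsilon(\varepsilon)=\varepsilon$, which settles item~1. For the induction step I would treat the two items simultaneously, since they are really a case analysis: either $\partial^+(\tau)=\emptyset$ (in which case I must show $\lang(\tau)=\{\varepsilon\}$) or $\partial^+(\tau)\neq\emptyset$ (in which case I must exhibit an element with nullable $\varepsilon$-value).

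The core observation I would use repeatedly is that a word $x\in\Sigma^+$ contributing a partial derivative can be split as $x=ay$, and $\partial_{ay}(\tau)=\partial_y(\partial_a(\tau))$, so $\partial^+(\tau)=\bigcup_{a\in\Sigma}\partial(\partial_a(\tau))=\bigcup_{a\in\Sigma}\bigcup_{\rho\in\partial_a(\tau)}\partial(\rho)$. Then for each operator I unfold $\partial_a(\tau)$ via Definition~\ref{def:pd-shuffle} and relate the partial derivatives of $\tau$ to those of its subexpressions. Concretely: for $\tau=\alpha+\beta$, nonemptiness of $\partial^+(\tau)$ forces nonemptiness of $\partial^+(\alpha)$ or $\partial^+(\beta)$, and the induction hypothesis on that subexpression yields the required nullable element (which remains an element of $\partial^+(\tau)$); if both are empty then by item~2 both languages are $\{\varepsilon\}$ (or one subexpression is $\emptyset$, handled separately), whence $\lang(\alpha+\beta)=\{\varepsilon\}$. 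For $\tau=\alpha\beta$ one uses $\partial_a(\alpha\beta)=\partial_a(\alpha)\beta\cup\varepsilon(\alpha)\partial_a(\beta)$: if $\partial^+(\alpha)\neq\emptyset$ then by induction there is a nullable $\alpha_0\in\partial^+(\alpha)$, and iterating a derivative of $\alpha_0$ that produces $\varepsilon$ (or noting $\varepsilon\beta=\beta$ appears once $\alpha$ becomes nullable), one reaches an element whose $\varepsilon$-value, computed via $\varepsilon(\gamma\beta)=\varepsilon(\gamma)\varepsilon(\beta)$, can be forced to $\varepsilon$ — here one has to be slightly careful and possibly push further derivatives through $\beta$; if instead $\partial^+(\alpha)=\emptyset$ then $\lang(\alpha)=\{\varepsilon\}$ so $\alpha$ is nullable, $\partial_a(\alpha\beta)\supseteq\partial_a(\beta)$, reducing everything to $\beta$. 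For $\tau=\alpha^\star$, note $\partial_a(\alpha^\star)=\partial_a(\alpha)\alpha^\star$ and $\varepsilon(\gamma\alpha^\star)=\varepsilon(\gamma)$, so a nullable derivative of $\alpha$ gives a nullable derivative of $\alpha^\star$ (and if $\partial^+(\alpha)=\emptyset$ then $\partial^+(\alpha^\star)=\emptyset$ and $\lang(\alpha^\star)=\{\varepsilon\}^\star=\{\varepsilon\}$). For $\tau=\alpha\shuffle\beta$, using $\partial_a(\alpha\shuffle\beta)=\partial_a(\alpha)\shuffle\{\beta\}\cup\{\alpha\}\shuffle\partial_a(\beta)$ and $\varepsilon(\gamma\shuffle\delta)=\varepsilon(\gamma)\varepsilon(\delta)$: if both $\partial^+(\alpha)=\emptyset$ and $\partial^+(\beta)=\emptyset$ then $\partial^+(\alpha\shuffle\beta)=\emptyset$ and $\lang(\alpha\shuffle\beta)=\{\varepsilon\}\shuffle\{\varepsilon\}=\{\varepsilon\}$; otherwise, say $\partial^+(\alpha)\neq\emptyset$, take the nullable $\alpha_0$ from the induction hypothesis and a word $x$ with $\alpha_0\in\partial_x(\alpha)$, $x\neq\varepsilon$; then $\partial_x(\alpha\shuffle\beta)\supseteq\partial_x(\alpha)\shuffle\{\beta\}\ni\alpha_0\shuffle\beta$, and I continue deriving inside $\beta$ if needed until the $\beta$-component also becomes nullable (which happens since $\lang$ is nonempty after finitely many steps, or directly using item~2 on $\beta$ if $\partial^+(\beta)=\emptyset$), finally obtaining some $\alpha_0\shuffle\beta_0$ with $\varepsilon(\alpha_0)=\varepsilon(\beta_0)=\varepsilon$.

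The main obstacle I anticipate is the concatenation and shuffle cases, where exhibiting the nullable element requires not just one derivative step but possibly a sequence — one needs to derive a nullable element of the first factor down to $\varepsilon$, and then separately derive the second factor down to a nullable element — and one must argue this terminates. The clean way around this is to strengthen the inductive argument slightly: rather than only claiming a nullable element exists in $\partial^+(\tau)$, observe that whenever $\lang(\tau)\neq\{\varepsilon\}$ and $\tau\neq\emptyset$, there is actually a word $x\in\Sigma^+$ with $\varepsilon\in x^{-1}\lang(\tau)$, i.e.\ $x\in\lang(\tau)\setminus\{\varepsilon\}$; then by Proposition~\ref{prop:shufflequotientproperty}, $\lang(\partial_x(\tau))=x^{-1}\lang(\tau)\ni\varepsilon$, so some $\alpha_0\in\partial_x(\tau)\subseteq\partial^+(\tau)$ has $\varepsilon\in\lang(\alpha_0)$, i.e.\ $\varepsilon(\alpha_0)=\varepsilon$. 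This routes around all the bookkeeping: item~1 follows because $\partial^+(\tau)\neq\emptyset$ implies (again via Proposition~\ref{prop:shufflequotientproperty}) that some quotient $x^{-1}\lang(\tau)$ with $x\neq\varepsilon$ is nonempty, hence $\lang(\tau)\neq\{\varepsilon\}$ and $\tau\neq\emptyset$, and we apply the preceding sentence; item~2 follows contrapositively — if $\tau\neq\emptyset$ and $\lang(\tau)\neq\{\varepsilon\}$, then $\partial^+(\tau)\neq\emptyset$ by the same token, and $\varepsilon(\tau)=\varepsilon(\lang(\tau))$ is $\varepsilon$ exactly when $\varepsilon\in\lang(\tau)$. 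So in fact the whole lemma reduces to Proposition~\ref{prop:shufflequotientproperty} plus the elementary fact that $\lang(\tau)=\emptyset$ iff $\tau$ reduces to $\emptyset$ (which in this grammar means $\tau=\emptyset$ syntactically, since every $\alpha$ in the grammar denotes a nonempty language), and I would present it that way rather than by brute-force structural induction.
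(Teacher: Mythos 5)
Your final argument---reducing both items to Proposition~\ref{prop:shufflequotientproperty} together with the observation that every expression generated by the grammar other than $\emptyset$ (and hence every partial derivative) denotes a nonempty language, so that a suitable $x\in\Sigma^+$ with $\varepsilon\in\lang(\partial_x(\cdot))=x^{-1}\lang(\cdot)$ yields a nullable element for item~1, while for item~2 all quotients by nonempty words vanish yet the language itself cannot be empty---is correct and is essentially the paper's own proof. The structural-induction sketch that precedes it is superfluous (and its concatenation and shuffle cases are exactly the hand-wavy ones), but since you explicitly discard it in favour of the semantic argument, the proposal stands.
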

\begin{proof}
  \begin{enumerate}
  \item From the grammar rule (\ref{gram:shuffle}) it follows that
    $\emptyset$ cannot appear as a subexpression of a larger term.
    Suppose that there is some $\gamma \in \partial^+(\tau)$. We
    conclude, from Definition~\ref{def:pd-shuffle} and from the
    previous remark, that there is some word $x \in \Sigma^+$ such
    that $x \in \lang(\gamma)$. This is equivalent to $\varepsilon \in
    \lang(\partial_x(\gamma))$, which means that there is some
    $\alpha_0 \in \partial_x(\gamma) \subseteq \partial^+(\tau)$ such
    that $\varepsilon(\alpha_0)=\varepsilon.$
  \item $\partial^+(\tau) = \emptyset$ implies that $\partial_x(\tau)
    = \emptyset$ for all $x \in \Sigma^+$. Thus,
    $\lang(\partial_x(\tau)) = \{ \; y \mid xy \in \lang(\tau) \;\} =
    \emptyset$, and consequently there is no word $z \in \Sigma^+$ in
    $\lang(\tau)$. On the other hand, since $\emptyset$ does not
    appear in $\tau$, it follows that $\lang(\tau) \not=
    \emptyset$. Thus, $\lang(\tau) = \{ \varepsilon \}$.\qed
\end{enumerate}
\end{proof}

\begin{proposition}
  \label{prop:pd_recursion}
  $\partial^+$ satisfies the following,\\
\begin{equation*}
  \begin{array}[t]{rcl}
    \partial^+(\emptyset)&=&\partial^+(\varepsilon) = \emptyset \\
    \partial^+(a) &=& \{ \varepsilon\} \quad (a \in \Ash)\\
     \partial^+(\alpha^*) &=& \partial^+(\alpha)\nodot  \alpha^*
      \end{array}
      \qquad    \begin{array}[t]{rcl}
      \partial^+(\alpha+\beta) &=& \partial^+(\alpha) \cup \partial^+(\beta) \\
      \partial^+(\alpha \beta) &=& \partial^+(\alpha) \nodot \beta
      \cup \partial^+(\beta) \\
      \partial^+(\alpha \shuffle \beta) &=& \partial^+(\alpha)
      \shuffle \partial^+(\beta) \\ 
      &\cup&
      \partial^+(\alpha) \shuffle \{ \beta \} \cup \{ \alpha \}
      \shuffle \partial^+(\beta). 
    \end{array}
  \end{equation*}
\end{proposition}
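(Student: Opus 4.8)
The plan is to treat the clauses one at a time; a structural induction on $\alpha$ as a whole is not needed. The base clauses ($\emptyset$, $\varepsilon$, $a$) and the clause for $\alpha+\beta$ are immediate from Definition~\ref{def:pd-shuffle}: one has $\partial_x(\emptyset)=\partial_x(\varepsilon)=\emptyset$ for every $x\in\Sigma^+$, $\partial_a(a)=\{\varepsilon\}$ while $\partial_x(a)=\emptyset$ for every other $x\in\Sigma^+$, and a one-line induction on $|x|$ gives $\partial_x(\alpha+\beta)=\partial_x(\alpha)\cup\partial_x(\beta)$ for all $x$; in each case, taking $\bigcup_{x\in\Sigma^+}$ yields the asserted equality.

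For each of the three remaining clauses I would first establish a closed form for the word-derivative $\partial_x$ and then take the union over $x\in\Sigma^+$. The closed forms are $\partial_x(\alpha^\star)=\bigcup\{\,\partial_v(\alpha)\,\alpha^\star \mid x=uv,\ v\in\Sigma^+,\ u\in\lang(\alpha^\star)\,\}$ and $\partial_x(\alpha\beta)=\partial_x(\alpha)\,\beta\,\cup\,\bigcup\{\,\partial_v(\beta)\mid x=uv,\ v\in\Sigma^+,\ u\in\lang(\alpha)\,\}$, both for $x\in\Sigma^+$, together with $\partial_x(\alpha\shuffle\beta)=\bigcup\{\,\partial_u(\alpha)\shuffle\partial_v(\beta)\mid x\in u\shuffle v\,\}$ for all $x\in\Sigma^\star$ (with the convention $\partial_\varepsilon(\gamma)=\{\gamma\}$). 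Each is proved by induction on $|x|$, the base case being the corresponding single-letter clause of Definition~\ref{def:pd-shuffle}; in the inductive step one writes $\partial_{xa}(\cdot)=\partial_a(\partial_x(\cdot))$, substitutes the induction hypothesis, distributes $\partial_a$ over the union, and applies the relevant single-letter clauses to each resulting term. Recognising the outcome as the claimed formula for $xa$ relies on two observations: that the non-empty suffixes of $xa$ are exactly the words $va$ with $v$ a suffix of $x$ (for $\alpha^\star$ and $\alpha\beta$), respectively that $xa\in u'\shuffle v'$ holds iff $u'$ or $v'$ ends in $a$ and the truncation shuffles to $x$ (for $\alpha\shuffle\beta$); and that ``$\varepsilon(\gamma)=\varepsilon$ for some $\gamma\in\partial_x(\alpha)$'' is equivalent to ``$x\in\lang(\alpha)$'', which is immediate from Proposition~\ref{prop:shufflequotientproperty}.

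Given the closed forms, the three identities follow by simplifying $\bigcup_{x\in\Sigma^+}$. For $\alpha^\star$, since $\varepsilon\in\lang(\alpha^\star)$ the prefix $u$ may be taken empty, so $v$ ranges over all of $\Sigma^+$ and $\partial^+(\alpha^\star)=\bigcup_{v\in\Sigma^+}\partial_v(\alpha)\,\alpha^\star=\partial^+(\alpha)\,\alpha^\star$. For $\alpha\beta$ the first summand contributes $\partial^+(\alpha)\,\beta$; for the second, $\lang(\alpha)\neq\emptyset$ — because, by the grammar~(\ref{gram:shuffle}), $\emptyset$ never occurs as a proper subexpression (the remark used in the proof of Lemma~\ref{lemma:pdcontainseps}), and an $\emptyset$-free expression has non-empty language — so again $v$ ranges over all of $\Sigma^+$ and the second summand contributes $\partial^+(\beta)$, giving $\partial^+(\alpha\beta)=\partial^+(\alpha)\,\beta\cup\partial^+(\beta)$. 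For $\alpha\shuffle\beta$, a pair $(u,v)$ occurs in the decomposition of some $x\in\Sigma^+$ precisely when $(u,v)\neq(\varepsilon,\varepsilon)$; splitting these pairs according to which of $u,v$ is empty, and using $\partial_\varepsilon(\gamma)=\{\gamma\}$, produces exactly the three summands $\partial^+(\alpha)\shuffle\partial^+(\beta)$, $\partial^+(\alpha)\shuffle\{\beta\}$ and $\{\alpha\}\shuffle\partial^+(\beta)$.

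The step I expect to be the main obstacle is the shuffle closed form: because $\partial_{xa}$ consumes letters from the left, the induction naturally calls for a ``last-letter'' description of $\shuffle$, the mirror image of the ``first-letter'' definition given in Section~\ref{sec:reshu}; one should either invoke that $\shuffle$ commutes with word reversal or re-prove this small symmetric fact, and then carefully track how appending $a$ redistributes the decompositions $x\in u\shuffle v$. A secondary nuisance is the $\alpha^\star$ case, where matching the two sides of the closed form also requires knowing that every non-empty word of $\lang(\alpha^\star)$ factors as $u'v'$ with $u'\in\lang(\alpha^\star)$ and $\varepsilon\neq v'\in\lang(\alpha)$, which one obtains by peeling off the last non-empty $\lang(\alpha)$-factor.
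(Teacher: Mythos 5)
Your argument is correct, but it takes a genuinely different route from the paper's. The paper proves the proposition by structural induction on the expression, and for each constructor establishes the two inclusions separately: the inclusion $\partial^+(\gamma)\subseteq E$ by an inner induction on $|x|$ that only ever \emph{bounds} $\partial_x(\gamma)$ from above (using that $\partial^+$ of the subexpressions is closed under $\partial_a$), and the reverse inclusion by exhibiting each constituent set inside some word derivative, with a dedicated lemma (Lemma~\ref{lemma:pdcontainseps}) to handle the concatenation case when $\varepsilon(\alpha)=\emptyset$. You instead compute exact closed forms for the word derivatives $\partial_x(\alpha^\star)$, $\partial_x(\alpha\beta)$ and $\partial_x(\alpha\shuffle\beta)$ and then evaluate $\bigcup_{x\in\Sigma^+}$ directly; this yields both inclusions at once and is strictly more informative (your shuffle formula $\partial_x(\alpha\shuffle\beta)=\bigcup\{\partial_u(\alpha)\shuffle\partial_v(\beta)\mid x\in u\shuffle v\}$ is the natural extension of Antimirov's word-derivative identities and could be of independent use, e.g.\ for Proposition~\ref{prop:shufflequotientproperty}). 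The price is the combinatorial bookkeeping you correctly anticipate: the last-letter recursion for $\shuffle$ (mirror of the defining one), the suffix analysis for $xa$, and the factorisation of non-empty words of $\lang(\alpha^\star)$. Your replacement of the paper's Lemma~\ref{lemma:pdcontainseps} by the direct observation that $\emptyset$-free expressions have non-empty language, combined with the equivalence ``some $\gamma\in\partial_x(\alpha)$ has $\varepsilon(\gamma)=\varepsilon$ iff $x\in\lang(\alpha)$'' from Proposition~\ref{prop:shufflequotientproperty}, is sound and arguably cleaner. I see no gap, only the (acknowledged) obligation to actually carry out the inductions proving the three closed forms, which is where the real work of your version lives.
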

\begin{proof}
  The proof proceeds by induction on the structure of $\alpha$. It is
  clear that $\partial^+(\emptyset)=\emptyset$,
  % =\pi(\emptyset)$, 
  $\partial^+(\varepsilon)=\emptyset$
  %=\pi(\varepsilon)$ 
  and, for $a \in \Ash$, $\partial^+(a) = \{\varepsilon\}$.
   %=\pi(a)$. 
   
  In the remaining cases, to prove that an inclusion
  $\partial^+(\gamma) \subseteq E$ holds for some expression $E$, we
  show by induction on the length of $x$ that for every $x \in
  \Sigma^+$ one has $\partial_x(\gamma) \subseteq E$.  We will
  therefore just indicate the corresponding computations for
  $\partial_a(\gamma)$ and $\partial_{xa}(\gamma)$, for $a \in
  \Sigma$.  We also make use of the fact that, for any expression
  $\gamma$ and letter $a \in \Sigma$, the set $\partial^+(\gamma)$ is
  closed for taking derivatives w.r.t.~$a$, i.e.,
  $\partial_a(\partial^+(\gamma)) \subseteq \partial^+(\gamma)$.

  Now, suppose the claim is true for $\alpha$ and $\beta$.  There are
  four induction cases to consider.
  \begin{itemize}
  \item For $\alpha + \beta$, we have $\partial_a(\alpha+\beta)
    = \partial_a(\alpha)+\partial_a(\beta) \
    \subseteq \partial^+(\alpha) \cup \partial^+(\beta)$, as well as
    $\partial_{xa}(\alpha+\beta)
    = \partial_a(\partial_x(\alpha+\beta))
    \subseteq \partial_a(\partial^+(\alpha) \cup \partial^+(\beta))
    \subseteq \partial_a(\partial^+(\alpha))
    \cup \partial_a(\partial^+(\beta)) \subseteq \partial^+(\alpha)
    \cup \partial^+(\beta)$. Similarly, one proves that
    $\partial_x(\alpha) \in \partial^+(\alpha+\beta)$ and
    $\partial_x(\beta) \in \partial^+(\alpha+\beta)$, for all $x \in
    \Sigma^+$.

  \item For $\alpha^\star$, we have
    $\partial_a(\alpha^*)=\partial_a(\alpha) \nodot \alpha^*
    \subseteq \partial^+(\alpha) \nodot \alpha^*$, as well as
\begin{align*}
  \partial_{x a} (\alpha^*)=&\ \partial_a(\partial_x(\alpha^*))
  \subseteq \partial_a(\partial^+(\alpha) \nodot \alpha^*)
  \subseteq \partial_a(\partial^+(\alpha)) \nodot \alpha^*
  \cup \partial_a(\alpha^*) \\ &\subseteq \partial^+(\alpha) \nodot
  \alpha^* \cup \partial_a(\alpha) \nodot \alpha^*
  \subseteq \partial^+(\alpha) \nodot \alpha^*.
\end{align*}
Furthermore, $\partial_a(\alpha)\alpha^\star
= \partial_a(\alpha^\star) \subseteq \partial^+(\alpha^\star)$ and
$\partial_{xa}(\alpha) \alpha^\star
= \partial_a(\partial_x(\alpha))\alpha^\star
\subseteq \partial_a(\partial_x(\alpha)\alpha^\star)
\subseteq \partial_a(\partial^+(\alpha^\star))
\subseteq \partial^+(\alpha^\star)$.

\item For $\alpha \beta$, we have $\partial_a(\alpha \beta)
  = \partial_a(\alpha)\beta \cup \varepsilon(\alpha) \partial_a(\beta)
  \subseteq \partial^+(\alpha)\beta \cup \partial^+(\beta)$ and
\begin{align*}
\partial_{xa}(\alpha\beta) = &\ \partial_a(\partial_x(\alpha\beta)) \subseteq
\partial_a(\partial^+(\alpha)\beta \cup \partial^+(\beta)) =
\partial_a(\partial^+(\alpha)\beta)
\cup \partial_a(\partial^+(\beta))\\&
\subseteq \partial_a(\partial^+(\alpha))\beta \cup \partial_a(\beta)
\cup \partial_a(\partial^+(\beta)) \subseteq \partial^+(\alpha)\beta
\cup \partial^+(\beta).
\end{align*} 

Also, $\partial_a(\alpha)\beta \subseteq \partial_a(\alpha\beta)
\subseteq \partial^+(\alpha\beta)$ and
\begin{align*}
\partial_{xa}(\alpha)\beta = \partial_a(\partial_x(\alpha))\beta 
\subseteq
\partial_a(\partial_x(\alpha)\beta) \subseteq 
\partial_a(\partial^+(\alpha\beta)) \subseteq \partial^+(\alpha\beta).
\end{align*}

Finally, if $\varepsilon(\alpha)=\varepsilon$, then $\partial_a(\beta)
\subseteq \partial_a(\alpha \beta)$ and $\partial_{xa}(\beta)
= \partial_a(\partial_x(\beta)) \subseteq
\partial_a(\partial_x(\alpha\beta))=\partial_{xa}(\alpha\beta)$. We
conclude that $\partial_x(\beta) \subseteq \partial_x(\alpha \beta)$
for all $x \in \Sigma^+$, and therefore $\partial^+(\beta)
\subseteq \partial^+(\alpha \beta)$.  Otherwise,
$\varepsilon(\alpha)=\emptyset$, and it follows from
Lemma~\ref{lemma:pdcontainseps} that $\partial^+(\alpha) \not=
\emptyset$, and that there is some $\alpha_0 \in \partial^+(\alpha)$
with $\varepsilon(\alpha_0) = \emptyset$. As above, this implies that
$\partial_x (\beta) \subseteq \partial_x(\alpha_0 \beta)$ for all $x
\in \Sigma^+$.  On the other hand, have already shown that
$\partial^+(\alpha) \beta \subseteq \partial^+(\alpha \beta)$. In
particular, $\alpha_0 \beta \in \partial^+(\alpha \beta)$. From these
two facts, we conclude that $\partial_x(\beta)
\subseteq \partial_x(\alpha_0 \beta)
\subseteq \partial_x(\partial^+(\alpha \beta))
\subseteq \partial^+(\alpha \beta)$, which finishes the proof for the
case of concatenation.
\item For $\alpha \shuffle \beta$, we have 
\begin{align*}
  \partial_a(\alpha \shuffle \beta) = &\ \partial_{a}(\alpha) \shuffle
  \{\beta\} \cup \{\alpha\} \shuffle \partial_a(\beta) \\ &
  \subseteq \partial^+(\alpha) \shuffle \partial^+(\beta) \cup
  \partial^+(\alpha) \shuffle \{ \beta \} \cup \{ \alpha \}
  \shuffle \partial^+(\beta)
\end{align*}   
and
  \begin{align*}
    \partial_{x a } (\alpha \shuffle \beta) \subseteq &
    \ \partial_a(\partial^+(\alpha)
    \shuffle \partial^+(\beta)\cup \partial^+(\alpha) \shuffle \{
    \beta \} \cup \{\alpha\} \shuffle \partial^+(\beta)) \\ =&
    \ \partial_a(\partial^+(\alpha)
    \shuffle \partial^+(\beta))\cup \partial_a(\partial^+(\alpha)
    \shuffle \{ \beta \}) \cup \partial_a(\{\alpha\}
    \shuffle \partial^+(\beta)) \\ = &\ \partial_a(\partial^+(\alpha))
    \shuffle \partial^+(\beta) \cup \partial^+(\alpha)
    \shuffle \partial_a(\partial^+(\beta))
    \cup \partial_a(\partial^+(\alpha)) \shuffle \{ \beta \} \\&
    \cup \partial^+(\alpha) \shuffle \partial_a(\beta)
    \cup \partial_a(\alpha) \shuffle \partial^+(\beta)
    \cup \{\alpha\} \shuffle \partial_a(\partial^+(\beta))\\
    \subseteq &\
    \partial^+(\alpha)
    \shuffle \partial^+(\beta)\cup \partial^+(\alpha) \shuffle \{
    \beta \} \cup \{\alpha\} \shuffle \partial^+(\beta).
\end{align*}
Now we prove that for all $x \in \Sigma^+$, one has
$\partial_x(\alpha) \shuffle \{\beta\} \subseteq \partial_x(\alpha
\shuffle \beta)$, which implies $\partial^+(\alpha) \shuffle \{\beta\}
\subseteq \partial^+(\alpha \shuffle \beta)$. In fact, we have
$\partial_a(\alpha) \shuffle \{\beta\} \subseteq \partial_a(\alpha
\shuffle \beta)$ and
\begin{align*}
  \partial_{xa}(\alpha) \shuffle \{\beta\} \subseteq
  &\ \partial_a(\partial_x(\alpha)) \shuffle \{\beta\} \\ \subseteq
  &\ \partial_a(\partial_x(\alpha) \shuffle \{\beta\})
  \subseteq \partial_a(\partial_x(\alpha \shuffle \beta)) =
\partial_{xa}(\alpha \shuffle \beta).
\end{align*}
Showing that $\{\alpha\} \shuffle \partial_x(\beta)
\subseteq \partial_x(\alpha \shuffle \beta)$ is analogous.  Finally,
for $x, y \in \Sigma^+$ we have $\partial_x(\alpha)
\shuffle \partial_y(\beta) \subseteq \partial_y(\partial_x(\alpha)
\shuffle \{\beta\} ) \subseteq
\partial_y(\partial_x(\alpha \shuffle \beta )) = \partial_{xy}(\alpha \shuffle \beta) \subseteq \partial^+(\alpha \shuffle \beta)$.\qed
\end{itemize}
\end{proof}

\begin{corollary}
  \label{pd_eq_pi}
Given $\alpha \in \tsh$, one has $\partial^+(\alpha) = \pi(\alpha)$.
\end{corollary}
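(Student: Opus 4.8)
The plan is to observe that Proposition~\ref{prop:pd_recursion} and Definition~\ref{def:pi} present literally the same system of recursive equations, the only difference being the symbol $\partial^+$ versus $\pi$. Consequently the identity $\partial^+(\alpha) = \pi(\alpha)$ follows by a straightforward structural induction on $\alpha$, with no new combinatorial work required: all the substance — namely that $\partial^+$ actually obeys those equations — has already been discharged in Proposition~\ref{prop:pd_recursion}.

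Concretely, I would first handle the base cases. For $\alpha \in \{\emptyset,\varepsilon\}$ both sets are empty, and for $\alpha = a \in \Sigma$ both sets equal $\{\varepsilon\}$, in each case by comparing the first two lines of Proposition~\ref{prop:pd_recursion} with the first two lines of Definition~\ref{def:pi}. For the inductive step, assume $\partial^+(\gamma) = \pi(\gamma)$ for all proper subexpressions $\gamma$ of $\alpha$, and split into the four cases $\alpha^\star$, $\alpha_1 + \alpha_2$, $\alpha_1 \alpha_2$, $\alpha_1 \shuffle \alpha_2$. In each case, rewrite $\partial^+(\alpha)$ using the corresponding clause of Proposition~\ref{prop:pd_recursion}, apply the induction hypothesis to replace every occurrence of $\partial^+(\alpha_i)$ by $\pi(\alpha_i)$, and read the result off as the right-hand side of the matching clause of Definition~\ref{def:pi}. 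For instance, in the shuffle case one gets $\partial^+(\alpha_1 \shuffle \alpha_2) = \partial^+(\alpha_1)\shuffle\partial^+(\alpha_2)\cup\partial^+(\alpha_1)\shuffle\{\alpha_2\}\cup\{\alpha_1\}\shuffle\partial^+(\alpha_2) = \pi(\alpha_1)\shuffle\pi(\alpha_2)\cup\pi(\alpha_1)\shuffle\{\alpha_2\}\cup\{\alpha_1\}\shuffle\pi(\alpha_2) = \pi(\alpha_1\shuffle\alpha_2)$, and the concatenation case uses $\partial^+(\alpha_1\alpha_2)=\partial^+(\alpha_1)\alpha_2\cup\partial^+(\alpha_2) = \pi(\alpha_1)\alpha_2\cup\pi(\alpha_2)=\pi(\alpha_1\alpha_2)$.

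Since the real content was already established, there is no genuine obstacle; the only point that deserves a moment's attention is bookkeeping, namely checking that the set-level operations $S\beta$ and $S\shuffle T$, together with the degenerate conventions when a factor equals $\varepsilon$ or $\emptyset$, are used with exactly the same meaning in Proposition~\ref{prop:pd_recursion} as in Definition~\ref{def:pi}, so that the two right-hand sides coincide syntactically once the induction hypothesis has been applied. Combined with Proposition~\ref{prop:pi}, this also yields that $\partial^+(\alpha)$ is a support of $\alpha$, which is what makes the partial derivative construction of the automaton legitimate.
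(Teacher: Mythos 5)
Your proposal is correct and matches the paper's intended argument exactly: the paper states the corollary without proof precisely because Proposition~\ref{prop:pd_recursion} shows that $\partial^+$ satisfies the same recursive equations as Definition~\ref{def:pi}, so the identity follows by the routine structural induction you describe. Your remark about checking that the set-level conventions for $S\beta$ and $S\shuffle T$ agree in both places is sensible bookkeeping but raises no real issue.
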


We conclude that $\partial(\alpha)$
 %is finite and that it 
corresponds to the set $\{\alpha\} \cup \pi(\alpha)$, as is the case
for standard regular expressions. It is well known that the set of
partial derivatives of a regular expression gives rise to an
equivalent \NFA, called the Antimirov automaton or partial derivative
automaton, that accepts the language determined by that expression.
This remains valid in our extension of the partial derivatives to
regular expressions with shuffle.
\begin{definition}
Given $\tau \in \tsh$, we define
the partial derivative automaton associated to $\tau$ by
$$\apd(\tau) = \langle \partial(\tau), \Sigma, \{\tau \}, \delta_\tau,
F_\tau\rangle,$$ where
 $F_\tau=\{\; \gamma \in \partial(\tau) \mid \varepsilon(\gamma) =\varepsilon \;\}$ and $\delta_\tau(\gamma, a) = \partial_a(\gamma)$. 
\end{definition}
It is easy to see that the following holds.
\begin{proposition}\label{proposition:equivpdska}
  For every state $\gamma \in \partial(\tau)$, the right language
  $\lang_ \gamma$ of $\gamma$ in $\nfa(\tau)$ is equal to
  $\lang(\gamma)$, the language represented by $\gamma$. In
  particular, the language accepted by $\apd(\tau)$ is exactly
  $\lang(\tau)$.
\end{proposition}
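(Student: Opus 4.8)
The plan is to reduce everything to Proposition~\ref{prop:shufflequotientproperty}, which already establishes $\lang(\partial_x(\tau)) = x^{-1}\lang(\tau)$. First I would record the bridge between the transition structure of $\apd(\tau)$ and iterated partial derivatives: by a straightforward induction on $|x|$, using the additivity of $\partial_a(\cdot)$ over sets together with the clause $\partial_{xa}(\gamma) = \partial_a(\partial_x(\gamma))$ of Definition~\ref{def:pd-shuffle}, one shows that $\delta_\tau(\{\gamma\}, x) = \partial_x(\gamma)$ for every $\gamma \in \partial(\tau)$ and every $x \in \Sigma^\star$. The only extra point to note is that the right-hand side stays inside the state set: if $\gamma \in \partial_w(\tau)$ for some $w$, then $\partial_x(\gamma) \subseteq \partial_{wx}(\tau) \subseteq \partial(\tau)$, so $\apd(\tau)$ is a well-defined \NFA and the equation above is legitimate.

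Second, I would unfold the definition of the right language of a state. By definition $x \in \lang_\gamma$ iff $\delta_\tau(\{\gamma\}, x) \cap F_\tau \neq \emptyset$, which by the previous step equals $\partial_x(\gamma) \cap F_\tau \neq \emptyset$, i.e.~there is some $\gamma' \in \partial_x(\gamma)$ with $\varepsilon(\gamma') = \varepsilon$. Since $\varepsilon(\gamma') = \varepsilon(\lang(\gamma'))$ equals $\varepsilon$ exactly when $\varepsilon \in \lang(\gamma')$, and $\lang(\partial_x(\gamma)) = \bigcup_{\gamma' \in \partial_x(\gamma)} \lang(\gamma')$, this is equivalent to $\varepsilon \in \lang(\partial_x(\gamma))$. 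Now Proposition~\ref{prop:shufflequotientproperty} gives $\lang(\partial_x(\gamma)) = x^{-1}\lang(\gamma)$, so the condition becomes $\varepsilon \in x^{-1}\lang(\gamma)$, that is, $x \in \lang(\gamma)$. Hence $\lang_\gamma = \lang(\gamma)$, as claimed.

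Finally, for the ``in particular'' part, the language accepted by $\apd(\tau)$ is, by definition of acceptance, the set $\{\, x \mid \delta_\tau(\{\tau\}, x) \cap F_\tau \neq \emptyset \,\} = \lang_\tau$, the right language of the unique initial state $\tau$; note $\tau \in \partial(\tau)$ since $\partial_\varepsilon(\tau) = \{\tau\}$. By the first part this equals $\lang(\tau)$.

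I do not expect a genuine obstacle here: the statement is essentially a repackaging of Proposition~\ref{prop:shufflequotientproperty} together with the semantics of the final-state set $F_\tau$. The only place that calls for a little care is the inductive identification $\delta_\tau(\{\gamma\}, x) = \partial_x(\gamma)$, where one must keep track of the reading direction in the two definitions and invoke closure of $\partial(\tau)$ under each $\partial_a$ so that the automaton is genuinely defined on the state set $\partial(\tau)$; both facts are immediate from Definition~\ref{def:pd-shuffle}.
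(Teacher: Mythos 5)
Your argument is correct and is exactly the verification the paper has in mind: the paper states this result without proof (``it is easy to see''), having placed Proposition~\ref{prop:shufflequotientproperty} immediately before it for precisely the reduction you carry out. Your identification $\delta_\tau(\{\gamma\},x)=\partial_x(\gamma)$, the closure of $\partial(\tau)$ under $\partial_a$, and the translation of acceptance into $\varepsilon\in x^{-1}\lang(\gamma)$ via $F_\tau$ fill in the omitted details faithfully, with no gaps.
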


Note that for the \res $\alpha_n$ considered in~examples~\ref{ex:lang}
and \ref{ex:tight}, $\apd(\alpha_n)$ has $2^n$ states which is exactly
the bound presented by Mayer and Stockmeyer.

\section{Average State Complexity of the Partial Derivative Automaton}
\label{sec:average}

In this section, we estimate the asymptotic average size of the number
of states in partial derivative automata.  This is done by the use of
the standard methods of analytic combinatorics as expounded by
Flajolet and Sedgewick~\cite{flajolet08:_analy_combin}, which apply to
generating functions $A(z)=\sum_n a_nz^n$ associated to combinatorial
classes. Given some measure of the objects of a class $\mathcal{A}$,
the coefficient $a_n$ represents the sum of the values of this measure
for all objects of size $n$. We will use the notation $[z^n]A(z)$ for
$a_n$. For an introduction of this approach applied to formal
languages, we refer to Broda \emph{et
  al.}~\cite{broda13:_hitch_guide_descr_compl_analy_combin}. In order
to apply this method, it is necessary to have an unambiguous
description of the objects of the combinatorial class, as is the case
for the specification of $\tsh$-expressions without $\emptyset$ in
(\ref{gram:shuffle}).  For the length or size of an $\tsh$-expression
$\alpha$ we will consider the number of symbols in $\alpha$, not
counting parentheses.  Taking $k= |\Ash|$, we compute from
(\ref{gram:shuffle}) the generating functions $R_k(z)$ and $L_{k}(z)$,
for the number of $\tsh$-expressions without $\emptyset$ and the
number of alphabet symbols in $\tsh$-expressions without $\emptyset$,
respectively. Note that excluding one object, $\emptyset$, of size $1$
has no influence on the asymptotic study.

According to the specification in (\ref{gram:shuffle}) the generating function $R_k(z)$ for the number of $\tsh$-expressions without $\emptyset$ satisfies
\begin{eqnarray*}
  R_k(z) &=& z + kz +3z R_k(z)^2 + z R_k(z),
\end{eqnarray*}
thus,
\begin{eqnarray*}
  R_k(z) = \frac{(1-z) - \sqrt{\Delta_k(z)}}{6z}, \mbox{ where } \Delta_k(z)=1 - 2z -(11+12k)z^2.
\end{eqnarray*}
The radius of convergence of $R_k(z)$ is 
%\begin{eqnarray*}
 $ \rho_k = \frac{-1 +2\sqrt{3+3k}}{11+12k}$.
%\end{eqnarray*}
Now, note that the number of letters $l(\alpha)$ in an expression $\alpha$ satisfies: $l(\varepsilon)=0$, in $l(a)=1$, for $a \in \Ash$, $l(\alpha+\beta)=l(\alpha)+l(\beta)$, etc. From this, we conclude that the generating function $L_k(z)$ satisfies
\begin{eqnarray*}
  L_k(z) &=& kz +3z L_k(z) R_k(z) + z L_k(z),
\end{eqnarray*}
thus,
\begin{eqnarray*}
  L_k(z) = \frac{(-kz)}{6zR_k(z)+z-1} = \frac{kz}{\sqrt{\Delta_k(z)}}.
\end{eqnarray*}
Now, let $P_k(z)$ denote the generating function for the size of
$\pi(\alpha)$ for $\tsh$-expressions without $\emptyset$. From
Definition~\ref{def:pi} it follows that, given an expression $\alpha$,
an upper bound, $p(\alpha)$, for the number of elements\footnote{This
  upper bound corresponds to the case where all unions in
  $\pi(\alpha)$ are disjoint.}  in the set $\pi(\alpha)$ satisfies:
\begin{equation*}
\begin{array}{rcl}
 p(\varepsilon)&=&0\\
p(a)&=&1, \ \  \text{for $a \in \Ash$}\\
p(\alpha^\star) & = & p(\alpha)
\end{array}\qquad
\begin{array}{rcl}
p(\alpha+\beta)&=&p(\alpha)+p(\beta)\\
p(\alpha\beta) & = &p(\alpha)+p(\beta)\\
p(\alpha \shuffle \beta) &= &p(\alpha)p(\beta) + p(\alpha) + p(\beta).
\end{array}
\end{equation*}
 From this, we conclude that the generating function $P_k(z)$ satisfies
\begin{eqnarray*}
  P_k(z) &=& kz +6z P_k(z) R_k(z) + z P_k(z) + z P_k(z)^2,
\end{eqnarray*}
thus
\begin{eqnarray*}
  P_k(z) &=& Q_k(z) + S_k(z),
\end{eqnarray*}
where 
\begin{eqnarray*}
  Q_k(z) &=&\frac{\sqrt{\Delta_k(z)}}{z}, \qquad \qquad S_k(z) =  - \frac{\sqrt{\Delta'_k(z)}}{z},
\end{eqnarray*}
and $\Delta'_k(z) = 1-2z -(11+16k)z^2$.  The radii of convergence of
$Q_k(z)$ and $S_k(z)$ are respectively $\rho_k$ (defined above) and
$\rho'_{k} = \frac{-1+2\sqrt{3+4k}}{11+16k}.$

\subsection{Asymptotic analysis}
\label{sec:asymp}
A generating function $f$ can be seen as a complex analytic function,
and the study of its behaviour around its dominant singularity $\rho$
(in case there is only one, as it happens with the functions here
considered) gives us access to the asymptotic form of its
coefficients.  In particular, if $f(z)$ is analytic in some
appropriate neighbourhood of $\rho$, then one has the
following~\cite{flajolet08:_analy_combin,nicaud09:_averag_size_of_glush_autom,broda13:_hitch_guide_descr_compl_analy_combin}:
  \begin{enumerate}
  \item \label{coeff1} if $f(z)=a-b\sqrt{1-z/\rho}+o\left(\sqrt{1-z/\rho}\right)$, with
    $a,b\in \RR$, $b\not=0$, then $$[z^n]f(z)\sim
    \frac{b}{2\sqrt{\pi}}\,\rho^{-n}n^{-3/2};$$
  \item \label{coeff2}
    if $f(z)=\frac{a}{\sqrt{1-z/\rho}}+o\left(\frac{1}{\sqrt{1-z/\rho}}\right)$, 
    with $a\in \RR$, and $a\not=0$, then $$[z^n]f(z)\sim
    \frac{a}{\sqrt{\pi}}\,\rho^{-n}n^{-1/2}.$$
  \end{enumerate}
Hence, by \ref{coeff1}. one has for the number of $\tsh$-expressions of size $n$, 
\begin{eqnarray}
[z^n]R_k(z) &=& \frac{(3+3k)^{\frac14}}{6 \sqrt{\pi}}   \rho_k^{-n-\frac12}(n+1)^{- \frac32}
\end{eqnarray}
and by \ref{coeff2}. for the number of alphabet symbols in all expression of size $n$,
\begin{eqnarray}
[z^n]L_k(z) &=& \frac{k}{2 \sqrt{\pi} (3+3k)^{\frac14} }   \rho_k^{-n+\frac12}n^{- \frac12}.
\end{eqnarray}
Consequently, the average number of letters in an expression of size $n$, which we denote by $avL$, is asymptotically given by
\begin{eqnarray*}
  avL &=& \frac{[z^n]L_k(z)}{[z^n]R_k(z)} = \frac{3k \rho_k}{\sqrt{3+3k}} \frac{(n+1)^{\frac32}}{n^\frac12}.
\end{eqnarray*}
Finally, by \ref{coeff1}., one has for the size of expressions of size $n$, 
\begin{eqnarray*}
  [z^n]P_k(z) &=& [z^n]Q_k(z) + [z^n]S_k(z) \\
              &=& \frac{(3+3k)^\frac14 \rho_{k}^{-n-\frac12} + (3+4k)^\frac14 (\rho' _{k})^{-n-\frac12}}{2 \sqrt{\pi}} (n+1)^{-\frac32},
\end{eqnarray*}
and the average size of $\pi(\alpha)$ for an expression $\alpha$ of
size $n$, denoted by $avP$, is asymptotically given by
\begin{eqnarray*}
  avP  &=& \frac{[z^n]P_k(z)}{[z^n]R_k(z)}.
\end{eqnarray*}
Taking into account Proposition~\ref{proposition:pdfinite}, we want to
compare the values of $\log_2 avP$ and $avL$. In fact, one has
\begin{eqnarray*}
  \lim_{n,k \to \infty} \frac{\log_2 avP}{avL} &=& \log_2 \frac43 \sim 0.415.
\end{eqnarray*}
This means that,
\begin{eqnarray*}
  \lim_{n,k \to \infty} avP^{1/avL} &=& \frac43.
\end{eqnarray*}

Therefore, one has the following significant improvement, when
compared with the worst case, for the average case upper bound.

\begin{proposition}
  For large values of $k$ and $n$ an upper bound for the average number
  of states of $\apd$ is $(\frac43)^{|\alpha|_\Sigma}$. 
\end{proposition}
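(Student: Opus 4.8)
The plan is to read the claimed bound off the asymptotic estimates already computed in this section; the only genuine work is to identify which singularity governs the growth of $P_k$ and to interpret the resulting double limit.

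First I would record that, by Corollary~\ref{pd_eq_pi}, the state set of $\apd(\alpha)$ is $\partial(\alpha)=\{\alpha\}\cup\pi(\alpha)$, so its cardinality is at most $|\pi(\alpha)|+1\le p(\alpha)+1$, where $p$ is the upper bound from the footnote above whose generating function is $P_k(z)$. Summing over all $\tsh$-expressions without $\emptyset$ of size $n$ and dividing by their number, the average number of states of $\apd$ over expressions of size $n$ is bounded above by $avP+1$, with $avP=[z^n]P_k(z)/[z^n]R_k(z)$ as above.

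Next I would pin down the growth rate of $avP$. Since $\Delta'_k$ has the strictly larger quadratic coefficient, one has $\rho'_k<\rho_k$, so among the two summands of $P_k=Q_k+S_k$ it is $S_k$ that carries the dominant singularity, whence $[z^n]P_k(z)\sim[z^n]S_k(z)$. Dividing the displayed asymptotics for $[z^n]P_k(z)$ and $[z^n]R_k(z)$ then gives $avP\sim 3\bigl(\tfrac{3+4k}{3+3k}\bigr)^{1/4}(\rho_k/\rho'_k)^{n+1/2}$, so $\log_2 avP\sim n\log_2(\rho_k/\rho'_k)$, while the displayed asymptotics for $avL$ give $avL\sim\tfrac{3k\rho_k}{\sqrt{3+3k}}\,n$. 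Substituting the expansions $\rho_k\sim\tfrac{\sqrt3}{6\sqrt k}$ and $\rho'_k\sim\tfrac{1}{4\sqrt k}$ and letting $n,k\to\infty$ yields $\log_2(avP)/avL\to 2-\log_2 3=\log_2\tfrac43$, i.e.\ $avP^{1/avL}\to\tfrac43$; this is exactly the computation displayed just before the statement.

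Finally I would interpret the limit. Since $avL$ is, by construction, the asymptotic average of $|\alpha|_\Sigma$ over expressions of size $n$, the limit says that for large $k$ and $n$ one has, on average, $avP\approx(\tfrac43)^{avL}=(\tfrac43)^{|\alpha|_\Sigma}$, and by the first step the average number of states of $\apd$ is $avP+1$, hence bounded by $(\tfrac43)^{|\alpha|_\Sigma}$ in the same regime; juxtaposing this with Proposition~\ref{proposition:pdfinite} shows how much this improves on the worst-case bound $2^{|\alpha|_\Sigma}$. The main obstacle is not an estimate but a matter of formulation: one has to be comfortable passing from a ratio of two global averages, each a function of $n$ and $k$, to a per-expression bound with $|\alpha|_\Sigma$ in the exponent, which is precisely why the proposition is stated asymptotically, for large $k$ and $n$, rather than as an exact inequality.
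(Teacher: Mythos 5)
Your proposal is correct and follows essentially the same route as the paper: the paper offers no separate proof beyond the asymptotic computation of $[z^n]R_k$, $[z^n]L_k$, $[z^n]P_k$ and the limit $\lim_{n,k\to\infty} avP^{1/avL}=\frac43$ displayed just before the statement, and you reconstruct exactly that computation, correctly identifying $\rho'_k<\rho_k$ so that the $S_k$ term dominates, and obtaining $\log_2(avP)/avL\to 2-\log_2 3=\log_2\frac43$. Your closing caveat about reading a per-expression exponent $|\alpha|_\Sigma$ off a ratio of global averages is apt, but the paper's own formulation is no more precise on this point.
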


%{\bf Nota:}
%
%On the other hand,
%\begin{eqnarray*}
%   \lim_{n,k \to \infty} \frac{avP}{2^{avL}} &=& 0 \qquad \mbox{and} \qquad
% \lim_{n,k \to \infty} \frac{avP}{(\frac43)^{avL}} = \infty.
%\end{eqnarray*}
%
\section{Conclusion and Future Work}
\label{sec:conclusion}
We implemented in the \fado system~\cite{fado} the construction of the
$\apd$ for \res with shuffle and performed some experimental tests for
small values of $n$ and $k$. Those experiments over statistically
significant samples of uniform random generated \res suggest that the
upper bound obtained in the last section falls way short of its true
value. This is not surprising as in the construction of
$\pi(\alpha)\cup \{\alpha\}$ repeated elements can occur. 

In previous
work~\cite{broda11:_averag_state_compl_of_partial_deriv_autom}, we
identified classes of standard \res that capture a significant
reduction on the size of $\pi(\alpha)$. In the case of \res with
shuffle, those classes enforce only a marginal reduction in the number
of states, but a drastic increase in the complexity of the associated
generating function. Thus the expected gains don't seem to justify its
quite difficult asymptotic study.

Sulzmann and Thiemann~\cite{sulzmann15:_deriv_for_regul_shuff_expres}
extended the notion of Brzozowski derivative for several variants of
the shuffle operator. It will be interesting to carry out a
descriptional complexity study of those constructions and to see if it
is interesting to extend the notion of partial derivative to those
shuffle variants.

An extension of the partial derivative construction for extended \res
with intersection and negation was recently presented by Caron
\emph{et. al}~\cite{caron11:_partial_deriv_of_exten_regul_expres}.  It
will be also interesting to study the average complexity of this
construction.

%\bibliographystyle{splncs03}
%\bibliography{FAdoBib}

\begin{thebibliography}{10}
\providecommand{\url}[1]{\texttt{#1}}
\providecommand{\urlprefix}{URL }

\bibitem{antimirov96:_partial_deriv_regul_expres_finit_autom_const}
Antimirov, V.M.: Partial derivatives of regular expressions and finite
  automaton constructions. Theoret. Comput. Sci.  155(2),  291--319 (1996)

\bibitem{broda11:_averag_state_compl_of_partial_deriv_autom}
Broda, S., Machiavelo, A., Moreira, N., Reis, R.: On the average state
  complexity of partial derivative automata. International Journal of
  Foundations of Computer Science  22(7),  1593--1606 (2011)

\bibitem{broda12:_averag_size_of_glush_and}
Broda, S., Machiavelo, A., Moreira, N., Reis, R.: On the average size of
  {Glushkov} and partial derivative automata. International Journal of
  Foundations of Computer Science  23(5),  969--984 (2012)

\bibitem{broda13:_hitch_guide_descr_compl_analy_combin}
Broda, S., Machiavelo, A., Moreira, N., Reis, R.: A hitchhiker's guide to
  descriptional complexity through analytic combinatorics. Theor. Comput. Sci.
  528,  85--100 (2014),
  \url{http://www.sciencedirect.com/science/article/pii/S0304397514001066}

\bibitem{caron11:_partial_deriv_of_exten_regul_expres}
Caron, P., Champarnaud, J., Mignot, L.: Partial derivatives of an extended
  regular expression. In: Dediu, A.H., Inenaga, S., Mart{\'{\i}}n{-}Vide, C.
  (eds.) 5th LATA 2011. LNCS, vol. 6638, pp. 179--191. Springer (2011),
  \url{http://dx.doi.org/10.1007/978-3-642-21254-3_13}

\bibitem{champarnaud01:_from_mirkin_prebas_to_antim}
Champarnaud, J.M., Ziadi, D.: From {Mirkin}'s prebases to {Antimirov}'s word
  partial derivatives. Fundam. Inform.  45(3),  195--205 (2001)

\bibitem{estrade06:_explic_paral_regul_expres}
Estrade, B.D., Perkins, A.L., Harris, J.M.: Explicitly parallel regular
  expressions. In: Ni, J., Dongarra, J. (eds.) 1st IMSCCS. pp. 402--409. {IEEE}
  Computer Society (2006),
  \url{http://doi.ieeecomputersociety.org/10.1109/IMSCCS.2006.60}

\bibitem{fado}
FAdo, P.: {FAdo}: tools for formal languages manipulation.
  \texttt{http://fado.dcc.fc.up.pt/} ({Access date:01102014})

\bibitem{flajolet08:_analy_combin}
Flajolet, P., Sedgewick, R.: Analytic Combinatorics. CUP (2008)

\bibitem{gelade10:_succin_of_regul_expres_with}
Gelade, W.: Succinctness of regular expressions with interleaving, intersection
  and counting. Theor. Comput. Sci.  411(31-33),  2987--2998 (2010),
  \url{http://dx.doi.org/10.1016/j.tcs.2010.04.036}

\bibitem{gruber10:_descr_and_algor_compl_of_regul_languag}
Gruber, H.: On the descriptional and algorithmic complexity of regular
  languages. Ph.D. thesis, Justus Liebig University Giessen (2010)

\bibitem{gruber08:_finit_autom_digrap_connec_and}
Gruber, H., Holzer, M.: Finite automata, digraph connectivity, and regular
  expression size. In: Aceto, L., Damg{\aa}rd, I., Goldberg, L.A.,
  Halld{\'{o}}rsson, M.M., Ing{\'{o}}lfsd{\'{o}}ttir, A., Walukiewicz, I.
  (eds.) 35th {ICALP}. LNCS, vol. 5126, pp. 39--50. Springer (2008)

\bibitem{kumar14:_novel_algor_for_conver_of}
Kumar, A., Verma, A.K.: A novel algorithm for the conversion of parallel
  regular expressions to non-deterministic finite automata. Applied Mathematics
  \& Information Sciences  8,  95--105 (2014)

\bibitem{mayer94:_word_probl_this_time_with_inter}
Mayer, A.J., Stockmeyer, L.J.: Word problems-this time with interleaving. Inf.
  Comput.  115(2),  293--311 (1994),
  \url{http://dx.doi.org/10.1006/inco.1994.1098}

\bibitem{b.g.mirkin66:_algor_for_const_base_in}
Mirkin, B.G.: An algorithm for constructing a base in a language of regular
  expressions. Engineering Cybernetics  5,  51---57 (1966)

\bibitem{nicaud09:_averag_size_of_glush_autom}
Nicaud, C.: On the average size of {Glushkov}'s automata. In: Dediu, A.H.,
  Ionescu, A.M., Mart\'{\i}n-Vide, C. (eds.) Proc. 3rd LATA. LNCS, vol. 5457,
  pp. 626--637. Springer (2009)

\bibitem{sulzmann15:_deriv_for_regul_shuff_expres}
Sulzmann, M., Thiemann, P.: Derivatives for regular shuffle expressions. In:
  9th LATA (2015), to appear

\end{thebibliography}

\end{document}